\documentclass[11pt]{amsart}
\usepackage{amscd,amssymb,amsxtra}
\usepackage[mathscr]{eucal}
\usepackage{mathabx}
\usepackage{comment}
\usepackage{color}
\usepackage{enumitem}
\setlength{\textwidth}{6.5truein} \setlength{\hoffset}{-.5truein}
\setlength{\textheight}{8.9truein} \setlength{\voffset}{-.4truein}
\setlength{\abovedisplayskip}{18pt plus4.5pt minus9pt}
\setlength{\belowdisplayskip}{\abovedisplayskip}
\setlength{\abovedisplayshortskip}{0pt plus4.5pt}
\setlength{\belowdisplayshortskip}{10.5pt plus4.5pt minus6pt}

\makeatletter
\let\@secnumfont\bfseries
\def\section{\@startsection{section}{1}%
  \z@{4\linespacing\@plus\linespacing}{\linespacing}%
  {\bfseries\centering}}
\def\introsection{\@startsection{section}{1}%
  \z@{3\linespacing\@plus\linespacing}{\linespacing}%
  {\bfseries\centering}}
\def\subsection{\@startsection{subsection}{2}%
   \z@{1.25\linespacing\@plus.7\linespacing}{.5\linespacing}%
   {\normalfont\bfseries}}
\def\subsectionsinline{\def\subsection{\@startsection{subsection}{2}%
  \z@{1\linespacing\@plus.7\linespacing}{-.5em}%
  {\normalfont\bfseries}}}

\makeatother

\theoremstyle{definition}
\newtheorem{definition}[equation]{Definition}

\newtheorem*{definition*}{Definition}
\newtheorem*{example*}{Example}
\newtheorem*{problem*}{Problem}
\newtheorem*{exercise*}{Exercise}
\newtheorem*{question*}{\color{blue}Question}
\newtheorem*{construction*}{Construction}

\theoremstyle{remark}

\newtheorem{remark}[equation]{Remark}

\newtheorem*{note*}{Note}
\newtheorem*{notation*}{Notation}
\newtheorem*{remark*}{Remark}
\newtheorem*{data*}{Data}

\theoremstyle{plain}
\newtheorem{theorem}[equation]{Theorem}

\newtheorem{lemma}[equation]{Lemma}
\newtheorem{proposition}[equation]{Proposition}

\newtheorem*{theorem*}{Theorem}
\newtheorem*{corollary*}{Corollary}
\newtheorem*{lemma*}{Lemma}
\newtheorem*{proposition*}{Proposition}
\newtheorem*{conjecture*}{Conjecture}
\newtheorem*{claim*}{Claim}
\newtheorem*{proposal*}{Proposal}
\newtheorem*{conclusion*}{Conclusion}
\newtheorem*{hypothesis*}{Hypothesis}
\newtheorem*{assumption*}{Assumption}

\newenvironment{proof*}[1][\proofname]{
  \begin{proof}[#1]}{  
\end{proof}}

\numberwithin{equation}{section}

\definecolor{refkey}{rgb}{0,.6,.4}

\renewcommand{\:}{\colon}

\newcommand{\Ahat}{{\hat A}}

\newcommand{\CC}{{\mathbb C}}
\newcommand{\CP}{{\mathbb C\mathbb P}}

\DeclareMathOperator{\Hom}{Hom}

\DeclareMathOperator{\pt}{pt}

\newcommand{\RR}{{\mathbb R}}
\newcommand{\TT}{\mathbb T}
\DeclareMathOperator{\Spin}{Spin}

\newcommand{\ZZ}{{\mathbb Z}}

\newcommand{\chiup}{\raise.5ex\hbox{$\chi$}}

\newcommand{\inv}{^{-1}}
\DeclareRobustCommand{\mstrut}{^{\vphantom{1*\prime y\vee M}}}

\newcommand{\temsquare}{\raise3.5pt\hbox{\boxed{ }}}

\newcommand{\zmod}[1]{\ZZ/#1\ZZ}

\newcommand{\zt}{\zmod2}

\usepackage[all,2cell]{xy}
\usepackage[colorlinks,citecolor=refkey]{hyperref}

\definecolor{refkey}{rgb}{0,.8,.2}\definecolor{labelkey}{rgb}{1,0,0}



\DeclareMathOperator{\Ext}{Ext}

\DeclareMathOperator{\spinc}{spin^{c}}
\newcommand{\Co}{\CP^1}

\newcommand{\IZ}{I\ZZ}
\newcommand{\MSpin}{M\!\Spin}
\newcommand{\Oo}{\mathcal{O}(1)}
\newcommand{\RZ}{\RR/\ZZ}
\newcommand{\Spcn}{\Spin^c_n}
\newcommand{\cE}{\widecheck{E}}
\newcommand{\ca}{\check\alpha }
\newcommand{\cc}{\check\chi }
\newcommand{\CM}{\mathcal{M}}
\newcommand{\CN}{\mathcal{N}}

\newcommand{\sB}{\mathscr{B}}
\newcommand{\sX}{\mathscr{X}}
\newcommand{\sY}{\mathscr{Y}}

  \begin{document}

\abovedisplayskip18pt plus4.5pt minus9pt
\belowdisplayskip \abovedisplayskip
\abovedisplayshortskip0pt plus4.5pt
\belowdisplayshortskip10.5pt plus4.5pt minus6pt
\baselineskip=15 truept
\marginparwidth=55pt

\makeatletter
\renewcommand{\tocsection}[3]{%
  \indentlabel{\@ifempty{#2}{\hskip1.5em}{\ignorespaces#1 #2.\;\;}}#3}
\renewcommand{\tocsubsection}[3]{%
  \indentlabel{\@ifempty{#2}{\hskip 2.5em}{\hskip 2.5em\ignorespaces#1%
    #2.\;\;}}#3}
\makeatother

\setcounter{tocdepth}{2}




 \title[Topological sectors in the $\CP^1$ $\sigma $-model]{The sum over topological sectors \\ and $\theta$ in the 2+1-dimensional $\CP^1$ $\sigma$-model} 
 \author[D. S. Freed]{Daniel S.~Freed}
 \address{Department of Mathematics \\ University of Texas \\ Austin, TX
78712}
 \email{dafr@math.utexas.edu}
 \author[Z. Komargodski]{Zohar Komargodski}
 \address{Department of Particle Physics and Astrophysics, Weizmann Institute
of Science, ISRAEL\newline\hphantom{Mi}Simons Center for Geometry and Physics, Stony
Brook University,
Stony Brook, NY}
 \email{zohar.komargodski@weizmann.ac.il}
 \author[N. Seiberg]{Nathan Seiberg}
 \address{School of Natural Sciences, Institute for Advanced Study, Princeton, NJ 08540}
 \email{seiberg@ias.edu}
 \date{July 17, 2017}
 \begin{abstract}
 We discuss the three spacetime dimensional $\CP^N$ model and specialize to
the $\CP^1$ model.  Because of the Hopf map $\pi_3(\CP^1)=\ZZ$ one might try
to couple the model to a periodic $\theta$~ parameter. However, we argue that
only the values $\theta=0$ and $\theta=\pi$ are consistent. For these values
the Skyrmions in the model are bosons and fermions respectively, rather than
being anyons. We also extend the model by coupling it to a topological quantum
field theory, such that the Skyrmions are anyons.  We use techniques from
geometry and topology to construct the $\theta =\pi $ theory on arbitrary
3-manifolds, and use recent results about invertible field theories to prove
that no other values of~$\theta $ satisfy the necessary locality.
 \end{abstract}
\maketitle




   \section{Introduction}\label{sec:3}

The functional integral definition of quantum field theory involves
integrating over all possible configurations with a certain weight.  It is
often the case that the configuration space in the Euclidean functional
integral breaks into topologically distinct sectors labeled by $\nu$.  (These
sectors and their characterization can depend on the Euclidean spacetime the
theory is placed on.)  Then, defining $Z_\nu$ as the sum over the
configurations in the sector $\nu$, the total functional integral is given by
a linear combination of $Z_\nu$
 \begin{equation}\label{totalZ}
  Z=\sum_\nu a_\nu Z_\nu ~.
 \end{equation}
The possible values of the coefficients $a_\nu$ are constrained by various
consistency conditions like locality and unitarity.  Different consistent
choices of the $a_\nu$ correspond to distinct quantum field theories.  An
interesting problem is to find all possible consistent values of these
coefficients, thus finding all possible theories constructed out of the
building blocks $Z_\nu$.

A well known example is the quantum mechanical system of a single degree of
freedom on a circle.  Here, with Euclidean compact time the configuration
space is the space of maps $S^1\to S^1$ and $\nu$ is the winding number.  In
this case the coefficients $a_\nu$ are constrained to be determined by a
single periodic parameter $\theta$ as
 \begin{equation}\label{anut}
  a_\nu =e^{i\nu\theta}~.
 \end{equation}
Another example is the $4d$ pure $SU(N)$ gauge theory, where $\nu$ is the
instanton number and again we have \eqref{anut}. In these two cases we can
express $\nu$ as an integral of a local gauge invariant density and we can
interpret \eqref{anut} as arising from a term in the fundamental
Lagrangian. In many situations $\nu$ cannot be written as an integral over a
local density, but still an expression like \eqref{anut} exists.  A typical
example is the $1+1$-dimensional $SO(3)$ gauge theory, where $\nu$ is defined
modulo $2$ as the second Stiefel-Whitney class of a principal $SO(3)$-bundle,
and correspondingly the allowed values of $\theta$ in \eqref{anut} are $0$
and $ \pi$.

Locality and unitarity do not require $ a_\nu $ to be {the
exponential of} the integral of a local density, but rather they must be the
partition functions of an invertible field theory~\cite{FM}.  In physics
terms, $\log a_\nu$ can be thought of as an action of a classical field
theory, which is local, but not necessarily an integral of a local
density. Recent progress in understanding the structure of invertible
theories can be brought to bear on the problem of combining~$Z_\nu $ into a
well-defined theory.

One of the goals of this paper is to clarify this sum over sectors in the
$2+1$ dimensional nonlinear $\CP^1$ $\sigma$-model.  Placing the theory on
$S^3$ and using the Hopf invariant, which is associated with
$\pi_3(\CP^1)=\ZZ$, the label $\nu$ in \eqref{totalZ} runs over the integers.
It labels an instanton number.  Then one might think that \eqref{anut} is a
consistent prescription for how to sum over these sectors and the theory is
labeled by a continuous periodic parameter $\theta$. Explicitly, let $\vec
n^2=1$ be a coordinate on $\CP^1\simeq S^2$. Define ${\rm Hopf}(\vec n)$ to
be a density such that $\int d^3x\ {\rm Hopf}(\vec n)\in \ZZ$ is the Hopf
invariant. Then, we can modify the standard Euclidean Lagrangian for $\vec n$
by adding a theta term (see e.g.\
\cite{WilczekCY,PolyakovMD} and many followup papers where this term was
discussed) as follows
 \begin{equation}\label{LwithHopf}
  {\mathcal L} = {f\over 2} (\partial \vec n)^2+i\theta {\rm Hopf}(\vec n)~,
 \end{equation}
with a dimensionful parameter $f$. In this presentation it would seem that
any $\theta$ is allowed and only $\theta\mod 2\pi$ matters. A hint that something might be wrong with this $\theta$ term
comes from the fact that ${\rm Hopf}(\vec n)$ does not have a local
expression in terms of $\vec n$. Furthermore, it is unclear how to define
this theta term on other three-manifolds.  Indeed, it has been known
that $\theta=0,\pi$ naturally arise in simple situations but not the other
values of $\theta$. See e.g.~\cite{AbanovQZ} and references therein.

We will {prove that, in fact, only $\theta=0$ and $\pi$ are
consistent.\footnote{The authors of \cite{DzyaloshinskyPQ} noted that certain
microscopic 2+1 dimensional models of spins lead only to the values $\theta=
0$ and $\theta=\pi$. The same is true in 1+1 dimensions for such microscopic
models. But unlike our claimed result in 2+1 dimensions, in 1+1 dimensions
the $\CP^1$ model is well defined with arbitrary $\theta$ and not just at
$\theta=0,\pi$.} Furthermore, we will explicitly construct the corresponding
mod 2 invariant on arbitrary spin three-manifolds.  We will also present
variants of the $\CP^1$ model, where the low-energy $\CP^1$ Goldstone bosons
are coupled to a nontrivial TQFT leading to additional long range
interactions such that $\theta$ behaves as if it has other values. These
other values of $\theta$ are now allowed because we have modified the theory
in the deep infrared.  In condensed matter language, we could, for example,
think about that as coupling the $\CP^1$ theory to a fractional quantum hall
state.\footnote{We thank P. Wiegmann for many useful discussions.}

In section 2 we will discuss a microscopic theory that flows at long
distances to the $\CP^N$ model with a Wess-Zumino term.  Here we will see that
the $\CP^1$ model is special and this microscopic construction flows only to
$\theta=0$ or $\theta=\pi$.  In section 3 we will study the local operators
in the theory and we will argue that the $\CP^1$ model makes sense only for
these two values of $\theta$ and not for generic values.  In section 4 we
will present modifications of the $\CP^1$ model, which behave as if they have
other values of $\theta$.  This is consistent with the arguments in section
3, because the low energy theory is not simply the $\CP^1$ model, but it is
coupled to a TQFT.  In section 5 we bring to bear results about
invertible field theories.  These arguments are based on extended locality
and factorization and use an analysis of the theory on general Wick-rotated
spacetimes.  We freely employ techniques from global analysis and homotopy
theory to construct the allowed terms at $\theta =0$ and $\theta=\pi$ and
prove that these are the only allowed values of $\theta$.

   \section{The $2+1$ dimensional $\CP^N$ model from a linear model}\label{sec:4}

We find it instructive to view the $3d$ $\CP^1$ model as an atypical special
case of the $3d$ $\CP^N$ model.  We study\footnote{Since all our
three-manifolds are orientable, they admit a spin structure.  When we say
that they are not spin we mean that we do not pick a spin structure, and if
we do, the answers are independent of that choice.  When we later view the
three-manifold as a boundary of a four-manifold, the latter is only assumed
to be spin$^c$, not necessarily spin.} it on a closed (that is, compact and
without a boundary) spin$^c$ (not necessarily spin) manifold $\CM_3$. In this
section we will embed this model in a particular microscopic renormalizable
field theory, which flows at long distances to the nonlinear model with
particular Wess-Zumino terms.

We start with $N+1$ scalar fields coupled to a $U(1)$ gauge field $b$. The
Lagrangian is given by
  \begin{equation}\label{Linearmodel} {1\over
  2e^2}(db)^2+ \sum_{I=1}^{N+1}|D_b\phi^I|^2+\mu^2 \sum_{I=1}^{N+1}|\phi^I|^2+
  (\sum_{I=1}^{N+1}|\phi^I|^2)^2
  ~.\end{equation}
If the scalar fields condense (e.g.\ when $\mu^2<0$),
then the gauge field $b$ is Higgsed and the global $SU(N+1)$ symmetry is spontaneously 
broken to $S[U(N)\times U(1)]$. Therefore we obtain in the deep infrared
Goldstone bosons parameterising the coset ${SU(N+1)\over S[U(N)\times
U(1)]}=\CP^{N}$.

We could modify the model~\eqref{Linearmodel} and add to the action the Chern-Simons
terms\footnote{Here and in similar expressions below we are imprecise since
the connection forms are not global forms on $\CM_3$.}
 \begin{equation}\label{CSt}
 \begin{aligned}
 &\int_{\CM_3}\left({K\over 4\pi} bdb +{1\over 2\pi} db B\right) \qquad {\rm
for} \qquad K\in 2\ZZ \\
&\int_{\CM_3}\left({K\over 4\pi} bdb +{1\over 2\pi} db A\right) \qquad {\rm for} \qquad K\in 2\ZZ +1~,
 \end{aligned}
 \end{equation}
where $A$ is a classical background spin$^c$ connection and $B$ is a
classical background $U(1)$ gauge field (see \cite{SeibergRSG,SeibergGMD}
for more details).  For even $K$ each term in \eqref{CSt} is separately well
defined (up to an additive $2\pi i \ZZ$, which does not affect the exponential
of the action).  The same is true for odd $K$ on a spin manifold.  But on a
general spin$^c$ manifold with odd $K$ only the sum of the two terms in
\eqref{CSt} is well defined mod $2\pi i \ZZ$.

The terms \eqref{CSt} are made more precise by writing\footnote{One can
prove that $\CM_4$ and extensions of the gauge fields and spin$^c$ structure exist.} them as $4d$ integrals
 \begin{equation}\label{CSfdf}
 \begin{aligned}
 &\int_{\CM_4} \left({K\over 4\pi} dbdb +{1\over 2\pi} db dB \right)\qquad
{\rm for} \qquad K\in 2\ZZ \\
&\int_{\CM_4}\left({K\over 4\pi} dbdb +{1\over 2\pi} db dA\right) \qquad {\rm
for} \qquad K\in 2\ZZ +1 ~,
 \end{aligned}
 \end{equation}
where the original spacetime $\CM_3$ is the boundary of $\CM_4$.

The first term in \eqref{CSt} (or, equivalently, \eqref{CSfdf}) is a
Chern-Simons term.  Due to it, the monopole operator acquires spin $K/2$.
The second term means that the monopole operator of the theory carries charge
one under a global magnetic $U(1)$ symmetry.\footnote{A monopole operator is
defined by removing a point from our spacetime and specifying boundary
conditions on the $S^2$ around it such that $\int_{S^2} db = 2\pi$.  (There
are many such distinct operators.)  There are several ways to see that such
an operator carries $SU(2)$ spin $j\ge K/2$ with $j-K/2=0\mod 1$ (see
~\cite{BKW}, for example).}  Because of the Chern-Simons terms \eqref{CSt},
this monopole operator carries charge $K$. We can make the monopole gauge
invariant by multiplying it by $K$ charged scalars.  For simplicity, let all
of them be at the same point on the $S^2$.  This configuration is not
invariant under the $SU(2)$ isometry of the sphere. In order to have an
$SU(2)$ covariant description we replace the state with fixed position of the
scalars with another wave function -- we introduce collective coordinates for
the action of the symmetry.  Intuitively, they move the location of the
scalars and hence they take values in $S^2$.  (If the scalars are at
different positions, the collective coordinates are on the symmetric product
$Sym_K(S^2)$.) So the effective theory of the collective coordinates is a
quantum mechanical system with an $S^2$ target space.  The Chern-Simons
coupling \eqref{CSt} becomes a standard Wess-Zumino term in this quantum
mechanical system.  This is the familiar problem of a charge $K$ particle in
the background of a magnetic monopole. The result is that the system has spin
$K/2$ (and possible higher order excitations with higher spins, which are
$K/2 \;+$ integer).  } $A$ or $B$ are classical background fields for that
symmetry.  These monopole operators satisfy the spin/charge relation
\cite{SeibergRSG}.

  \subsection{$N>1$}\label{subsec:4.1}

For $N>1$ the analysis of the linear model~\eqref{Linearmodel} with the
Chern-Simons terms~\eqref{CSt} is completely standard.  With negative mass
squared for the scalars $\phi^I$ they obtain an expectation value.  Then
we integrate out $b$ by using its equations of motion to find at low energies
a nonlinear model on $\CP^N$.  The $b$ equation of motion sets $db=\omega +
\dots$, where the ellipses represent higher order terms in the inverse radius
of the target space ($\sqrt f$ in \eqref{LwithHopf}) that we will ignore and
$\omega$ is the pullback of the K\"ahler form of the $\CP^N$ target space.
We normalize it such that its periods are integer multiples of $2\pi$
 \begin{equation}\label{Mspace}
  \int_{\CM_2} \omega = 2 \pi \ZZ ~.
 \end{equation}
Then, \eqref{CSfdf} becomes
 \begin{equation}\label{CSfd}
 \begin{aligned}
 &\int_{\CM_4} \left({K\over 4\pi} \omega\omega  +{1\over 2\pi} \omega dB
\right)\qquad {\rm for} \qquad K\in 2\ZZ \\
&\int_{\CM_4}\left({K\over 4\pi} \omega\omega +{1\over 2\pi} \omega dA\right)
\qquad {\rm for} \qquad K\in 2\ZZ +1 ~.
 \end{aligned}
 \end{equation}

As we remarked after \eqref{CSt}, on a spin manifold each term in
\eqref{CSfd} is separately meaningful as a $3d$ term.  The same is true on a
general spin$^c$ manifold with even $K$.  But for odd $K$ on a general
spin$^c$ manifold only the sum of the two terms in \eqref{CSfd} is
meaningful.  This means that the first term, is not associated with $H^4$,
but with a more subtle cohomology~\cite{F2}; see~\S\ref{subsec:2.2}.

The first term in \eqref{CSfd} is a Wess-Zumino term of the nonlinear model.
The second term has two complementary interpretations.  The first
interpretation involves the solitons of the model, which are known as
Skyrmions.  Viewing $\CM_2$ in \eqref{Mspace} as our space, these are
configurations with nonzero $\int_{\CM_2} \omega$.  The second term in
\eqref{CSfd} means that they carry charge $\int_{\CM_2}\omega$ under the
global $U(1)$ symmetry that $A$ or $B$ couple to.  This is completely
analogous to the situation in the $4d$ chiral Lagrangian
\cite{WittenTW,WittenTX}, where the Skyrmions carry baryon number.  The
second interpretation is related to our discussion above of monopole
operators of the microscopic theory. Similarly, we can discuss Skyrmion
operators in the macroscopic $\CP^N$ theory.  They are defined by removing a
point from our spacetime $\CM_3$ and specifying boundary conditions on the
small $S^2$ around the point such that with $S^2=\CM_2$ in \eqref{Mspace} we
have $\int_{S^2}\omega=2\pi$.  These operators have spin $K/2$ and the second
term in \eqref{CSfd} means that they are charged under the global $U(1)$
symmetry.\footnote{The computation of their spin is very similar to the
computation of the spin of the monopole operators in the previous
footnote. This is not surprising because they are the macroscopic descendants
of the microscopic monopole operators.  More explicitly, the Skyrmion
configuration breaks the $SU(2)$ isometry of the $S^2$ and the $SU(2)$ global
symmetry of the target space.  We introduce collective coordinates for their
quantization.  As above, the Wess-Zumino term in the $3d$ problem becomes a
Wess-Zumino term in the $1d$ problem of the collective coordinates making the
spin of the state $K/2$ or larger.}  Again, we see here the spin/charge
relation \cite{SeibergRSG}.

  \subsection{$N=1$}\label{subsec:4.2}

Just as in the analogous $4d$ problem where the special case with two flavors
is slightly different, the same is true in our case for $N=1$.  In these two
situations there is no standard Wess-Zumino term, though there is a
nonstandard one, defined for all~$N$, which specializes for~$N=1$ to the
mod~2 invariant at~$\theta =\pi $; see~\S\ref{subsec:2.2}.

Starting with the same microscopic linear model with a Chern-Simons term
\eqref{CSfdf} we can follow the steps above to integrate out $b$.  We again
find $db=\omega + \dots$.  But here we cannot write \eqref{CSfd}.  There are two
related reasons for that.  First, the first term, involving $\omega\omega$,
clearly vanishes -- there is no four-form on $\CP^1$.  Second, in fact, we
cannot always extend the fields on $\CM_3$ to $\CM_4$.  Specifically, if
$\CM_3=S^3$ the $\CP^1$ configurations are labeled by an integer $\CN$
associated with the nontrivial Hopf invariant $\pi_3(\CP^1)=\ZZ$ and
configurations with odd $\CN$ cannot be extended to a $4d$ bulk (see
Remark~\ref{thm:6}).

This is analogous to the similar situation in $4d$, which is associated with
$\pi_4(S^3)=\ZZ_2$.  There the Wess-Zumino term is replaced by another
term\footnote{A uniform picture using E-cohomology was presented for the $4d$
problem in \cite{F2}, and will be discussed for the $3d$ problem in section
5.} representing this $\ZZ_2$.  It can be viewed as a discrete $\theta$
parameter term in the sigma model \cite{WittenTW,WittenTX}.  We can try to
imitate it in our problem and to use $\pi_3(\CP^1)=\ZZ$, which can be
expressed as
 \begin{equation}\label{CNb}
  \CN={1\over 4\pi^2} \int_{\CM_3}  b_0 d b_0~,
 \end{equation}
where $d b_0 =\omega$.  Then we can attempt to add to the action the
$\theta$-term
 \begin{equation}\label{thetaterm}
  \theta\CN= {\theta\over 4\pi^2} \int_{\CM_3}  b_0 d b_0~.
 \end{equation}
Of course, the subtlety in this expression is in the fact that $b_0$ is a
nonlocal expression in terms of the nonlinear model variables.  This fact is
at the root of our conclusion below that the theory is consistent only for
$\theta=0$ and $\theta=\pi$.  Such expressions were studied in
\cite{WilczekCY,PolyakovMD} and many followup papers.

Repeating the analysis for $N>1$ we see that the microscopic Chern-Simons
couplings \eqref{CSt} lead to $\theta = \pi K$; i.e.\ we find the $\CP^1$
theory with $\theta=0$ or $\pi$.  As for higher $N$, we can write
 \begin{equation}\label{CSfdb}
 \begin{aligned}
&{1\over 2\pi} \int_{\CM_3} \omega B \ \ \qquad\qquad {\rm for}\qquad K\in
2\ZZ\\
&\pi\CN +{1\over 2\pi} \int_{\CM_3} \omega A \qquad {\rm for}\qquad K\in
2\ZZ+1 ~.
 \end{aligned}
 \end{equation}
Therefore, for $\theta=0$ the Skyrmions are bosons and for $\theta=\pi$ they
are fermions \cite{WilczekCY,PolyakovMD}.

Such constructions have  recently appeared  in~\cite{KS}.

   \section{Skyrmion operators and conditions on $\theta$}\label{sec:5}

The previous discussion raises the following question.  Starting with a
microscopic theory we derived the $\CP^1$ model with $\theta=0$ or
$\theta=\pi$ \eqref{CSfdb}.  Are there microscopic models that lead to other
values of $\theta$ as in \eqref{thetaterm}?  In order to address this
question we should study the $\CP^1$ model without relying on the details of
its UV completion.

We are now going to argue that the $\CP^1$ model with generic $\theta$ is
not a consistent quantum field theory, thus explaining why we cannot derive
it from a microscopic model.

Consider the theory on $\CM_3=S^2\times \RR$ and view $S^2$ as space and
$\RR$ as time. The theory has a topologically conserved current $\star
\omega$, hence, the Hilbert space is decomposed into sectors with fixed
soliton number $\int_{S^2} \omega$.  Let us turn on \eqref{thetaterm} with a
generic value of $\theta$.  It was argued in \cite{WilczekCY} that in this case
the single Skyrmion has spin $\theta/2\pi$.  This is a valid
answer\footnote{Particles in 2+1 dimensions are in representations of the
universal cover of the little group $SO(2)$, i.e.\ they can have any real spin.}
for a particle on a spatial $\RR^2$, but it is not sensible for the system on
$S^2$.  The states in this case must be in representations of $SU(2)$ (i.e.\
the universal cover of the Euclidean isometry group, $SO(3)$).  This is the
case only for $\theta=0$ or $\pi$.

An equivalent way to state it in $\RR^3$ is the following.  Particle states
must be in representations of a multiple cover of the rotation group
$SO(2)$; i.e.\ they can be in any representation of $\RR$.  As such,
they can have arbitrary real spin.  Local operators, on the other hand, must
be in representations of the Lorentz group, which in Euclidean space is
$SU(2)$.  Now consider a Skyrmion operator.  It is defined by removing a
point from $\RR^3$ and specifying boundary conditions that the integral over
a small $S^2$ around it is $\int_{S^2}\omega =1$.  In the presence of a
$\theta$-term this operator has spin $\theta/2\pi$ and therefore $\theta$
should be $0$ or $\pi$.  The relation between this point about the local
operators and the previous argument based on quantization on a spatial $S^2$
is standard and is clear using radial quantization.

These two equivalent perspectives can be stated also in the following way.
Above we mentioned the effective quantum mechanics in a monopole or Skyrmion
sector.  When we quantize the system on a spatial $S^2$ in the sector with
$\int_{S^2} \omega=\int_{S^2} db_0=2\pi$ the effective quantum mechanical
problem includes a collective coordinate on $S^2$.  The term \eqref{thetaterm}
leads to a Wess-Zumino term in that quantum mechanical problem.  It is
$SU(2)$ invariant only for $\theta=0$ or $\theta=\pi$. In more detail, the
classical theory is well defined and is $SU(2)$ invariant for all $\theta$.
In the quantum theory we have two options for generic $\theta$.  We can have
a well defined, but not $SU(2)$ invariant expression (pick a point on the
$S^2$, connect it to the position of the particle and the Lagrangian is the
area swept by that line during the time $(t,t+dt)$) or we can have an $SU(2)$
invariant expression by extending the map to a higher dimension, but then the
answer depends on the choice of the extension.

We see that even though the partition function of the theory on $S^3$ allows
arbitrary $\theta$ the entire quantum field theory is consistent only for
$\theta=0$ or $\theta=\pi$.  In section 5 we will find the same conclusion by
imposing consistency of the theory on more complicated spacetimes~$\CM_3$.
Here we argued it using $\CM_3=\RR^3-\{ {\rm point}\}$, or equivalently by
studying the local operators in the theory.

This is reminiscent of the analysis of \cite{AharonyHDA}, where subtle
choices in the coefficients $a_\nu$ in \eqref{totalZ} and corresponding
subtle topological terms in the action were identified by studying a $4d$
theory on $\RR^4$ minus some lines; i.e.\ by studying the consistency of line
operators in the theory.

   \section{Changing the quantization of the coefficients}\label{sec:6}

Next, following \cite{SeibergQD}, we modify the theory such that the $\CP^N$ model
looks as if it can have Wess-Zumino terms with fractional coefficients and
the $\CP^1$ model looks as if it has $\theta $ that is a fractional multiple
of $\pi$.  We do that by making the  $N+1$ scalars have charge $q$ under the $U(1)$ gauge field in the original microscopic model~\eqref{Linearmodel}.

Again, with an appropriate potential the gauge symmetry is Higgsed and the
low energy theory is a nonlinear model on $\CP^N$. However, unlike the
previous case, this is not the whole story.  Now the microscopic $U(1)$ gauge
symmetry is Higgsed to $\ZZ_q$.  The coupling of this $\ZZ_q$ gauge field to
the $\CP^N$ coordinates is obtained through the equation of motion $qdb =
\omega + \cdots$.  We give a global interpretation of this coupling
in~\S\ref{subsec:2.4}.

Following \cite{MaldacenaSS,BanksZN} it is convenient to represent the unbroken $\ZZ_q$ gauge theory in terms of two $U(1)$ gauge fields $b$ and $c$ as
 \begin{equation}\label{CStp}
 \begin{aligned}
 &\int_{\CM_3}\left({q\over 2\pi} cdb + {K\over 4\pi} bdb +{1\over 2\pi} db B
- {1\over 2\pi} c\omega \right) \qquad {\rm for} \qquad K\in 2\ZZ \\
 &\int_{\CM_3}\left({q\over 2\pi} cdb +{K\over 4\pi} bdb +{1\over 2\pi} db A - {1\over 2\pi} c\omega \right) \qquad {\rm for} \qquad K\in 2\ZZ +1~.
 \end{aligned}
 \end{equation}
In the first term $c$ is a Lagrange multiplier $U(1)$ gauge field forcing $b$
to be a $\ZZ_q$ gauge field.  It can be thought of as the dual of the overall
phase of the fundamental scalars.  As explained in \cite{KapustinGUA}, the
terms proportional to $K$ are Dijkgraaf-Witten terms \cite{DijkgraafPZ}\ in
this $\ZZ_q$ gauge theory.  For $q=1$ the expressions~\eqref{CStp} reduce
to~\eqref{CSfd} since ${1\over 2\pi} cdb$ is a trivial theory where we can
use the equations of motion freely without missing global issues
\cite{WittenYA}.

The third and fourth terms in \eqref{CStp} represent couplings of the $\ZZ_q$
gauge theory to the background fields $A$ or $B$ and to the fields of the
nonlinear model through the pull back of its K\"ahler form
$\omega$.\footnote{For $\omega=0$ we indeed have a $\ZZ_q$ gauge theory, but
for nonzero $\omega$ we have $qdb=\omega$ showing that $b$ is not a $\ZZ_q$
gauge field.  We will discuss the geometric interpretation of this
construction in section 5.}  The equation of motion of $c$ sets $qdb=\omega$
in agreement with the microscopic analysis.  This determines $db$ in terms of
the nonlinear model fields, but leaves a $\ZZ_q$ gauge field undetermined.

Consider the world line $\mathcal{ C}$ of a small Skyrmion.  We can
approximate the nonlinear model configuration by $\omega = 2\pi
\delta^{(2)}(\mathcal{ C})$.  Therefore, the term with $\omega$ in
\eqref{CStp} can be replaced by a Wilson line $e^{i\int_\mathcal{ C} c}$.  A
standard computation in the TQFT \eqref{CStp}, which follows from the
equations of motion, shows that this particle carries fractional $U(1)$
charge (under $A$ or $B$), which is $1\over q$ and its spin is ${K\over
2q^2}\mod 1$.

We conclude that in this system (with $q\not=1$) the Skyrmions become anyons.
However, the total number of Skyrmions in a compact space
 \begin{equation}\label{qtotal}
  {1\over 2\pi} \int_{\CM_2} \omega = {1\over 2\pi} \int _{\CM_2} qdb \in  q
  \ZZ
 \end{equation}
must be a multiple of $q$.  In order to determine the quantum numbers of this
configuration we deform it to $\omega = 2\pi q m \delta^{(2)}(\mathcal{ C})$ with
integer $m$.  Substituting this in the TQFT \eqref{CStp} we see that all these
anyons are combined to the line $e^{iqm\int_\mathcal{ C} c}$.  This line carries
integer charge and its total spin is ${Km^2\over 2}\mod 1$; i.e.\ it is
either an integer or half-integer.

We can attempt to integrate out $b$ in \eqref{CStp}.  Such integration out is
not legal because the field $b$ has long range interactions.  If we do that
anyway, using the equation of motion $qdb = \omega + \cdots$ and ignoring the
fact that it does not determine $b$ (not even up to a gauge transformation),
the expressions \eqref{CSfd}, \eqref{CSfdb}\ are modified.

For $N>1$ \eqref{CSfd} becomes
 \begin{equation}\label{CSfdp}
 \begin{aligned}
 &\int_{\CM_4} \left({K\over 4\pi q^2} \omega\omega  +{1\over 2\pi q} \omega
dB \right)\qquad {\rm for} \qquad K\in 2\ZZ \\
&\int_{\CM_4}\left({K\over 4\pi q^2} \omega\omega +{1\over 2\pi q} \omega
dA\right) \qquad {\rm for} \qquad K\in 2\ZZ +1 ~.
 \end{aligned}
 \end{equation}
The coefficient of the Wess-Zumino term acquired a factor of $1\over q^2$ and
the coupling to the background field acquired a factor of $1\over q$.  This
means that now the Skyrmions carry charge $1\over q$, as we saw above.  For
$N=1$ \eqref{CSfdb} becomes
 \begin{equation}\label{CSfdbp}
 \begin{aligned}
 &{\pi K\over q^2} \CN +{1\over 2\pi q} \int_{\CM_3} \omega B  \qquad {\rm
for}\qquad K\in 2\ZZ\\
&{\pi K\over q^2} \CN +{1\over 2\pi q} \int_{\CM_3} \omega A \qquad {\rm
for}\qquad K\in 2\ZZ+1 ~.
 \end{aligned}
 \end{equation}
As for higher $N$, the Skyrmions have fractional charge and the low energy
$\theta$-parameter is a fractional multiple of $\pi$.

We should emphasize, however, that the expressions \eqref{CSfdp},
\eqref{CSfdbp} are useful in analyzing local properties, but they are
imprecise and do not capture the global structure correctly.  For that we
need to keep the $\ZZ_q$ gauge field and not integrate it out.

We would like to relate this construction to the discussion in
section 3.  In this system with generic $q$ we cannot place a single Skyrmion
on a spatial $S^2$ because it is not invariant under the $\ZZ_q$ gauge
symmetry.  We can, however, place $q$ Skyrmions on $S^2$.  Then their total
spin is indeed in an $SU(2)$ representations.  We can repeat this point using
Skyrmion operators.  These operators are not gauge invariant.  They carry
$\ZZ_q$ gauge charge.  As such, they might not be in $SU(2)$ representations.
We can construct a gauge invariant operator by fusing $q$ Skyrmion operators.
This object has ${1\over 2\pi}\int_{S^2} \omega = q$ and it is in an $SU(2)$
representation.

To finish our discussion of effective fractional $\theta$ terms, we would like to present another construction, where the low energy theory consists of the $\CP^1$ model coupled to a TQFT.
Let us start with the following Euclidean Lagrangian
\begin{equation}\label{CPQHE}
\sum_{I=1,2}|D_b\phi^I|^2+\mu^2 \sum_{I=1,2}|\phi^I|^2+ (\sum_{I=1,2}|\phi^I|^2)^2+ {i\over 2\pi} bdc+{iq\over 4\pi} cdc +
\begin{cases}
{i\over 2\pi}cdB \qquad {\rm
for}\qquad q\in 2\ZZ\\
{i\over 2\pi}cdA \qquad {\rm
for}\qquad q\in 2\ZZ+1~,
 \end{cases}
\end{equation}
where again $B$ is a background $U(1)$ gauge field and $A$ is a spin$^c$
connection.  Physically the gauge fields $b$ and $c$ could be interpreted as
emergent gauge fields and $c$ represents a fractional quantum Hall effect.
As above, we end up with $db=\omega+\cdots$ and the coupling to the $\CP^1$ degrees of freedom
\eqref{CSfdb} becomes
 \begin{equation}\label{CSfdbm}
 {1\over 2\pi}\omega c+{q\over 4\pi} cdc +
 \begin{cases}
{1\over 2\pi}cdB \qquad {\rm
for}\qquad q\in 2\ZZ\\
{1\over 2\pi}cdA \qquad {\rm
for}\qquad q\in 2\ZZ+1~.
\end{cases}
\end{equation}
We see here a $U(1)_q$ Chern-Simons theory of $c$ coupled to the nonlinear $\CP^1$ model and to a background field ($A$ or $B$).

As in the previous construction \eqref{CStp}, a small Skyrmion with a worldline $\mathcal{ C}$ is represented in the $U(1)_q$ theory by a line operator $e^{i\int_\mathcal{ C} c}$.  It has spin $1\over 2q$ and charge $1\over q$; i.e.\ it is an anyon.  Also as in the previous example, we can incorrectly integrate out $c$ to find an effective $\CP^1$ theory with $\theta=-\pi/q$.

More generally, instead of considering the concrete examples leading
to~\eqref{CStp}, \eqref{CSfdbm}, we could couple the gauge field $b$ to a
general Chern-Simons TQFT with some matrix of Chern-Simons couplings
$k_{ij}$.

   \section{Invertible field theories and effective actions}\label{sec:2}

  \subsection{Generalities}\label{subsec:2.1}

The (effective) action of an $n$-dimensional field theory obeys strong
locality constraints: it can be computed on a (Wick rotated) spacetime by
assembling local contributions.  For example, a typical kinetic term
$\int_{}|d\phi |^2/2$ is the integral of an expression computed locally from
a field~$\phi $.  More interesting topological terms do not have such simple
formulas yet obey many of the same locality properties.  The strongest
expression of that locality is encoded in the notion of an \emph{extended}
field theory~\cite{F1,La,Lu}.  Furthermore, the exponentiated action is the
partition function of an \emph{invertible} field theory: for example, if a
closed $n$-manifold~$M$ is cut along a codimension one submanifold~$N$, then
the vector space of ``states'' associated to~$N$ is 1-dimensional.
The notion of an invertible field theory systematizes the locality one
demands in a classical action; the partition function is the exponential of
what would be the classical action (which need only be well-defined up to
shifts by~$2\pi i$).

The structure of a field theory, in particular its locality, is captured by
an Axiom System originally introduced by Segal~\cite{Se} in the context of
2-dimensional conformal field theories and Atiyah~\cite{At} for topological
field theories.  It has since been expanded and used more generally; it is
most developed for topological theories.  In this framework an invertible
field theory, after Wick rotation, becomes a map of \emph{spectra} in the
sense of stable homotopy theory.  Recently an \emph{extended} notion of
unitarity, or rather its Wick rotated version---reflection positivity---was
introduced in the invertible case~\cite{FH}.  Of course, we expect unitarity
in any physical theory, so an invertible field theory used in the action
should be reflection positive and in this paper we restrict to such field
theories.  The theorems in~\cite{FH} classify \emph{deformation classes} of
invertible theories as well as isomorphism classes of invertible topological
theories, as we review shortly.  Two exponentiated actions are in the same
deformation class if they can be joined by a smooth path of exponentiated
actions; in physics terms, they are related by adding a local term to the
lagrangian.  For example, $t\mapsto \exp\int_{}(1-t)|d\phi |^2/2$, $0\le
t\le1$, is a path from the exponentiated kinetic action to the trivial
action.  Topologically nontrivial actions have nontrivial deformation
classes, so are detected by the stable homotopy invariants introduced below.

The main result of~\cite{FH} states that the abelian group of deformation
classes of unitary invertible $n$-dimensional field theories is isomorphic to
the abelian group of homotopy classes of maps $\sB\to\Sigma ^{n+1}\IZ$ from a
Thom spectrum\footnote{The specific Thom spectrum, which involves a choice of
tangential structure (orientation, spin structure, etc.), is determined by
the symmetries in the theory.}~$\sB$ to the shifted Anderson dual to the
sphere spectrum.  We refer to \cite[\S5]{FH} and the references therein for
exposition, and remark that the Anderson dual was introduced in this context
in~\cite{HS}.  The torsion subgroup is the group of \emph{topological}
theories.\footnote{In~\cite{FH} this statement about topological theories is
a theorem; the general statement is left as a conjecture because in that
paper the authors did not set up the mathematical infrastructure necessary to
make it a theorem.}  We will also consider unitary invertible $n$-dimensional
theories with partition function an integer; they are classified up to
isomorphism by a homotopy class of maps $\sB\to\Sigma ^n\IZ$.  (For a theory
of oriented or spin manifolds, `unitary' here means that the partition
function changes sign under orientation-reversal.)  This classification
statement is not proved in~\cite{FH}, nor do we give a full discussion here,
but in any case it only enters peripherally in what follows.

The abelian groups computed here are generalized cohomology groups for the
cohomology theory defined by the Anderson dual.  They are not \emph{homotopy}
groups of a \emph{space}, but rather generalized \emph{cohomology}
groups of a \emph{spectrum}.

In the following subsections we treat the topological terms for the nonlinear
$\CP^N$~model on spin and $\spinc$ manifolds.  Then
in~\S\ref{subsec:2.4} we comment briefly on the effective models
in~\S\ref{sec:6}.

  \subsection{Spin manifolds}\label{subsec:2.2}

We begin by defining the Wess-Zumino term which appears in~\eqref{CSfd}.  We
express it in terms similar to the WZ term in the effective action for
pions~\cite{F2} and the spin Chern-Simons action~\cite{BM,J}.\footnote{It is
exactly the spin $U(1)$~Chern-Simons action at the lowest level, but
restricted to $\CP^N\subset \CP^{\infty}\simeq BU(1)$;
see~Remark~\ref{thm:14}.}  Namely, we use a generalized cohomology theory~$E$
which is a 2-stage Postnikov truncation of~$\IZ$: there is a map $E\to\IZ$
which captures the top two nonzero homotopy groups of the co-connective
spectrum~$\IZ$.  We defer to~\cite[\S1]{F2} for details about~$E$, which we
freely use in the following.  Two salient points are (i)~$E$~is
spin-oriented, so $E$-cohomology classes can be evaluated on spin manifolds,
and (ii)~there is a long exact sequence
  \begin{equation}\label{eq:1}
     \cdots\longrightarrow H^q(X;\ZZ)\xrightarrow{\;\;i\;\;}
     E^q(X)\xrightarrow{\;\;j\;\;}
     H^{q-2}(X;\zt) \xrightarrow{\;\;\beta \circ Sq^2\;\;}
     H^{q+1}(X;\ZZ)\longrightarrow \cdots
  \end{equation}
for any space~$X$, where $\beta $~is the integer Bockstein map.  This long
exact sequence characterizes~$E$.

  \begin{lemma}[]\label{thm:1}
 For~$N\ge2$ there is an isomorphism $E^4(\CP^N)\cong \ZZ$; the homomorphism
$H^4(\CP^N;\ZZ)\longrightarrow E^4(\CP^N)$ maps a generator to twice a
generator.  Also, $E^4(\CP^1)\cong \zt$ and a generator of~$E^4(\CP^N)$
restricts to a generator of~$E^4(\CP^1)$ under a linear inclusion
$\CP^1\hookrightarrow \CP^N$.
  \end{lemma}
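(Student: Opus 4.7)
The plan is to apply the long exact sequence~\eqref{eq:1} to $X=\CP^N$ in degree $q=4$, resolve the resulting extension problem by passing to the universal case, and then extract the restriction statement by naturality. Since $\CP^N$ is simply connected, $H^1(\CP^N;\zt)=0$; since it has cells only in even dimensions, $H^5(\CP^N;\ZZ)=0$; the group $H^2(\CP^N;\zt)\cong\zt$ is generated by the mod~$2$ reduction of the hyperplane class $x$ for all $N\ge 1$; and $H^4(\CP^N;\ZZ)\cong\ZZ$ is generated by~$x^2$ for $N\ge 2$ and vanishes for $N=1$. Substituting into~\eqref{eq:1} collapses the segment around $q=4$ to
\begin{equation*}
0\longrightarrow H^4(\CP^N;\ZZ)\xrightarrow{\;i\;} E^4(\CP^N)\xrightarrow{\;j\;}\zt\longrightarrow 0.
\end{equation*}
For $N=1$ the left term vanishes, giving immediately $E^4(\CP^1)\cong\zt$ with $j$ an isomorphism.

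For $N\ge 2$ there are two possible extensions---$\ZZ$, in which case $i$ is multiplication by~$2$, or the split extension $\ZZ\oplus\zt$---and deciding which occurs is the main obstacle, since~\eqref{eq:1} alone carries no information about extensions. I would resolve it by passing to the universal space $X=BU(1)\simeq \CP^{\infty}$, where the identical calculation produces an extension $0\to\ZZ\to E^4(BU(1))\to\zt\to0$. There one has the canonical ``spin $U(1)$ Chern--Simons class at level one'' $\check\lambda\in E^4(BU(1))$ constructed in~\cite{BM,J} (cf.~\cite{F2}), which by construction satisfies $i(c_1^2)=2\check\lambda$, forcing $E^4(BU(1))\cong\ZZ$ generated by~$\check\lambda$. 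The inclusion $\CP^N\hookrightarrow\CP^{\infty}$ induces isomorphisms on $H^2(-;\zt)$ and, for $N\ge 2$, on $H^4(-;\ZZ)$; applying the five lemma to the naturality square of~\eqref{eq:1} shows that the pullback of~$\check\lambda$ generates $E^4(\CP^N)\cong\ZZ$ with $i$ equal to doubling.

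For the restriction statement, the inclusion $\iota\:\CP^1\hookrightarrow\CP^N$ pulls the hyperplane class of~$\CP^N$ back to the hyperplane class of~$\CP^1$, so $\iota^*\:H^2(\CP^N;\zt)\to H^2(\CP^1;\zt)$ is an isomorphism of copies of~$\zt$. Naturality of~\eqref{eq:1} yields a commutative square with this isomorphism as its right vertical arrow; a generator of $E^4(\CP^N)\cong\ZZ$ maps to the generator of~$\zt$ under~$j$ (since $i(\ZZ)$ has index two), so by commutativity its restriction to $E^4(\CP^1)$ maps to the generator of $H^2(\CP^1;\zt)$ under the bottom~$j$. Since $j\:E^4(\CP^1)\to H^2(\CP^1;\zt)$ is an isomorphism, this restriction is the unique nonzero element of $E^4(\CP^1)\cong\zt$, hence its generator, as claimed.
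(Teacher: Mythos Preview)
Your argument is correct and follows the same route as the paper: both use the long exact sequence~\eqref{eq:1} at $q=4$ to produce the short exact sequence $0\to H^4(\CP^N;\ZZ)\to E^4(\CP^N)\to\zt\to 0$, and both invoke the class~$\lambda$ from~\cite{F2} to resolve the extension. The paper simply cites \cite[Proposition~1.9]{F2} for the computation of~$E^4(\CP^N)$ when $N\ge2$ and identifies the generator as the characteristic class $\lambda\in E^4(BSO)$ of the real 2-plane bundle underlying~$\Oo$; you instead unpack that citation by working on $BU(1)\simeq\CP^\infty$, observing that $i(c_1^2)=2\check\lambda$ forces the nonsplit extension (since in the split case $\ZZ\oplus\zt$ the image of~$i$ would not be divisible by~2), and then descending to~$\CP^N$ by the five lemma. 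For the restriction to~$\CP^1$ the paper argues via restriction of the bundle~$\Oo$, while you argue directly via naturality of~$j$; these are equivalent, since $j$~tracks $w_2$ of the underlying real bundle. Your version is more self-contained; the paper's is terser but makes the link to the characteristic class~$\lambda$ explicit, which is used later (e.g.\ in Remark~\ref{thm:14} and~\S\ref{subsec:2.4}).
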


  \begin{proof}
 The first statement is part of the proof of \cite[Proposition~1.9]{F2},
where it is also shown that the generator is the characteristic
class~$\lambda \in E^4(BSO)$ of the real 2-plane bundle underlying
$\mathcal{O}(1)\to\CP^N$.  Restricting that bundle under
$\CP^1\hookrightarrow \CP^N$ we obtain the second statement, after applying
the long exact sequence~\eqref{eq:1} with~$q=4$ and $X=\CP^1$.
  \end{proof}

\noindent
 Fix~$N\in \ZZ^{\ge1}$ and let $\chi $~denote a generator of~$E^4(\CP^N)$.
It has a unique lift $\cc\in \cE^4(\CP^N)$ to the differential theory, as we
see from~\cite[(1.8)]{F2}.  Let $M$~be a closed spin 3-manifold equipped with
a smooth map $\phi \:M\to\CP^N$.  The following is an exact analog
of~\cite[Definition~4.1]{F2}.

  \begin{definition}[]\label{thm:2}
 The \emph{WZ factor} in the $\sigma $-model exponentiated action on spin
manifolds is
  \begin{equation}\label{eq:2}
     W\!\mstrut _M(\phi ) = \exp\left( 2\pi i\;\pi ^M_*\phi ^*\cc \right).
  \end{equation}
  \end{definition}

\noindent
 The projection $\pi ^M\:M\to\pt$ induces the pushforward $\pi
^M_*\:\cE^4(M)\to \cE^1(\pt)\cong \RZ$.  We emphasize that
Definition~\ref{thm:2} works for~$N=1$ as well as~$N\ge2$.

  \begin{remark}[]\label{thm:7}
 Formula~\eqref{eq:2} corresponds to~$K=1$ in~\S\ref{sec:4}; the formula for
arbitrary~$K$ multiplies the quantity in parentheses by~$K$.  An application
of Stokes' theorem for differential $E$-theory analogous to~\cite[(4.3)]{F2}
reproduces the WZ term in~\eqref{CSfd}---the first term in the formula
with~$K=1$---assuming that $M$~bounds a compact spin 4-manifold~$W$ and $\phi
$~extends to a map $\Phi \:W\to\CP^N$.  (See Remark~\ref{thm:6} below.)
  \end{remark}

  \begin{remark}[]\label{thm:8}
 As in~\cite[(4.10)]{F2} the $\sigma $-model with WZ factor encodes the
statistics of skyrmions.
  \end{remark}

  \begin{remark}[]\label{thm:14}
 The hyperplane bundle $\Oo\to\CP^N$ has a natural $SU(2)$-invariant metric
and connection.  The WZ~factor~\eqref{eq:2} is the lowest level spin
Chern-Simons invariant of its pullback via~$\phi $.
  \end{remark}

The WZ factor varies smoothly with~$\phi $ if~$N\ge 2$, but is a topological
invariant if~$N=1$.  Next, we give a topological description for~$N=1$ which
does not use $E$-cohomology.  Let $M$~be a closed spin 3-manifold and $\phi
\:M\to\Co$.  Fix a regular value~$p\in \Co$ and a basis $e_1,e_2$ of
$T_p\Co$.  This produces a normal framing of the 1-manifold $S:=\phi \inv
(p)$, which after applying Gram-Schmidt we can assume is orthonormal.  (The
contractible choice of a Riemannian metric on~$M$ does not affect the mod~2
invariant we are defining.)  At each point of~$S$ there is a unique
completion~$e_1,e_2,e_3$ to an oriented orthonormal basis, and so two lifts
to the $\Spin_3$-bundle of frames of~$M$.  The resulting double cover of~$S$
may be identified with the spin bundle of frames of a spin structure on~$S$.

  \begin{lemma}[]\label{thm:3}
Set~$N=1$.  Then

 \begin{enumerate}[label=\textnormal{(\roman*)}]

 \item $W\!\mstrut _M(\phi ) =(-1)^{[S]}$, where $[S]\in \Omega
_1^{\Spin}\cong \zt$ is the spin bordism class of~$S$.

 \item  $W\!\mstrut _{S^3}(\phi )$ is the mod~2 Hopf invariant of~$\phi
\:S^3\to\Co$.

 \end{enumerate}
  \end{lemma}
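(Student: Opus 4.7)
The plan is to prove~(i) by showing that both assignments $(M,\phi)\mapsto W\!\mstrut_M(\phi)$ and $(M,\phi)\mapsto (-1)^{[S]}$ are spin bordism invariants that define equal homomorphisms on the finite group $\Omega_3^{\Spin}(\Co)\cong\zt$; (ii)~will then follow immediately. For bordism invariance of the left side I observe that Lemma~\ref{thm:1} says $\chi\in E^4(\Co)$ is $2$-torsion, so its differential lift~$\cc$ has vanishing curvature, and the Stokes' theorem of Remark~\ref{thm:7} forces $W\!\mstrut_M(\phi)=1$ whenever $(M,\phi)$ extends over a compact spin $4$-manifold $(W,\Phi)$. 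For the right side, the same normal-framing construction that produces the spin structure on~$S$ produces a spin structure on $\Phi\inv(p)\subset W$ which bounds~$S$ as spin $1$-manifolds, hence $[S]=0\in\Omega_1^{\Spin}$.

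Next, Pontryagin--Thom transversality gives $\tilde\Omega_3^{\Spin}(\Co)\cong\Omega_1^{\Spin}\cong\zt$ via $(M,\phi)\mapsto[\phi\inv(p)]$, and since $\Omega_3^{\Spin}(\pt)=0$ this is also $\Omega_3^{\Spin}(\Co)$. The right side of~(i) is therefore tautologically the nontrivial homomorphism $\Omega_3^{\Spin}(\Co)\to\{\pm1\}$, so the proof of~(i) reduces to evaluating $W\!\mstrut_M(\phi)$ on any generator. A convenient choice is the Hopf map $h\:S^3=SU(2)\to SU(2)/U(1)=\Co$, whose fiber is the Lie-group-framed $U(1)\cong S^1$; a direct check identifies the induced spin structure as the nonbounding one, so $[(S^3,h)]$ generates.

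To compute $W\!\mstrut_{S^3}(h)$ I use that $\pi_3(\CP^N)=0$ for $N\ge 2$, so the composite $S^3\xrightarrow{h}\Co\hookrightarrow\CP^N$ extends to a map $\Phi\:D^4\to\CP^N$; the characteristic map of the $4$-cell in $\CP^2=\Co\cup_h D^4$ does the job. By Lemma~\ref{thm:1} the curvature of~$\cc$ on~$\CP^N$ has de~Rham class $\tfrac12\,c_1(\Oo)^2$, and $\int_{D^4}\Phi^*c_1(\Oo)^2=\int_{\CP^2}c_1(\Oo)^2=1$, so Stokes' theorem for differential $E$-theory yields $W\!\mstrut_{S^3}(h)=\exp(\pi i)=-1$, completing~(i). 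Part~(ii) is then automatic because Pontryagin--Thom identifies $[\phi\inv(p)]\in\Omega_1^{\Spin}$ with the mod~$2$ Hopf invariant of $\phi\:S^3\to\Co$, so~(i) supplies the required sign.

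The main obstacle is the factor of~$\tfrac12$ in the curvature extraction, which is precisely the geometric content of Lemma~\ref{thm:1}: without the distinction between the $E$-theory generator~$\chi$ and the $H^4$ generator $c_1(\Oo)^2$ the computation would yield $+1$ and the lemma would fail. A secondary technical point is the explicit identification of the induced spin structure on the Hopf fiber as the nonbounding element of $\Omega_1^{\Spin}$; this can either be verified directly from the $SU(2)$-equivariant framing or bypassed by invoking that the Pontryagin--Thom map $\pi_3(S^2)\to\Omega_1^{\Spin}$ sends the Hopf class to the generator.
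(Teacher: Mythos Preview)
Your proof is correct and takes a genuinely different route from the paper's argument.

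The paper proves~(i) \emph{intrinsically}: it writes the generator as a pushforward $\chi=\iota_*\alpha$ from a point $\{p\}\hookrightarrow\Co$, then uses base-change along the transverse square $S\hookrightarrow M$ to rewrite $\pi^M_*\phi^*\cc$ as $\pi^S_*(\pi^S)^*\ca$.  A projection-formula step in $ko$-theory converts this to $\pi^S_*(1)\cdot a$ with $\pi^S_*(1)=[S]\in ko^{-1}(\pt)\cong\Omega_1^{\Spin}$, giving the sign directly for \emph{every} $(M,\phi)$ without ever computing a bordism group or evaluating on a generator.  Part~(ii) is then a one-line check on the Hopf map.

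Your argument instead reduces~(i) to a single evaluation: both sides are spin-bordism invariants on $\Omega_3^{\Spin}(\Co)\cong\zt$, and you compute $W_{S^3}(h)=-1$ by pushing the Hopf map into $\CP^2$, filling over $D^4$, and applying Stokes' theorem with the half-integral curvature $\tfrac12c_1(\Oo)^2$ supplied by Lemma~\ref{thm:1}.  This is the same embedding trick the paper employs \emph{later} in the proof of Proposition~\ref{thm:11}, so you are anticipating that idea.  The one step you might make more explicit is why $\int_{D^4}\Phi^*R=\int_{\CP^2}R$ for the \emph{specific} curvature form~$R$ (not merely its class): since $\Phi(S^3)\subset\CP^1$ and any $4$-form vanishes on the $2$-manifold~$\CP^1$, the form $\Phi^*R$ has compact support in $\operatorname{int}D^4$ relative to the boundary, so the integral is the relative pairing $\langle[R],\Phi_*[D^4,S^3]\rangle=\langle[R],[\CP^2]\rangle=\tfrac12$.

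The paper's approach buys a structural identification (the $ko$-theoretic interpretation and the explicit Umkehr map) valid uniformly for all $(M,\phi)$; yours is more elementary, avoids the $ko$-theory computation and the citation to~\cite{FMS}, and makes the role of the factor~$\tfrac12$ from Lemma~\ref{thm:1} completely transparent.
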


\noindent
 $S$~is a finite union of spin circles, each of which is bounding
(Neveu-Schwarz) or nonbounding (Ramond).  The invariant in~(i) is~$\pm1$
depending on the parity of the number of nonbounding components.
Alternatively, the normal framing of~$S$ determines an element of framed
bordism, which is isomorphic to spin bordism in dimension one.
Assertion~(ii) shows that $W\!\mstrut _M(\phi )$~extends the $\theta =\pi $
term for the Hopf invariant.

  \begin{proof}
 The inclusion $\iota\:\{p\}\hookrightarrow \Co$ with normal
framing~$e_1,e_2$ induces a pushforward $\iota _*\:E^2(\{p\})\longrightarrow
E^4(\Co)$.  Let $\alpha \in E^2(\{p\})\cong \zt$ be the generator and $\ca\in
\cE^2(\{p\})$ the unique lift to the differential group.  The commutativity
of the diagram
  \begin{equation}\label{eq:3}
     \begin{gathered} \xymatrix{E^2(\{p\})\ar[r]^{\iota _*} \ar[d]_{j} &
     E^4(\Co)\ar[d]^{j} \\ H^0(\{p\};\zt)\ar[r]^{\iota _*} & H^2(\Co;\zt)}
     \end{gathered}
  \end{equation}
implies that $\iota _*\alpha =\chi $.  It follows that $\pi ^{M}_*\phi ^*\cc
= (\pi ^{S})_*(\pi ^S) ^*(\ca )$.  The right hand side contains $\pi
^S_*\:\cE^2(S)\to \cE^1(\pt)$, which equals the topological pushforward
  \begin{equation}\label{eq:4}
     \pi ^S_*\:E^1(S;\RZ)\to E^0(\pt;\RZ)
  \end{equation}
on the flat part\footnote{which is a spectrum~$E_{\RZ}$ with $\pi
_0E_{\RZ}\cong \RZ$ and $\pi _{-1}E_{\RZ}\cong \zt$ connected by a nontrivial
$k$-invariant} of the differential $E$-theory groups, after identifying
$\ca\in E^1(\pt;\RZ)\cong \zt$.  Rewrite~\eqref{eq:4} as $\pi^S
_*\:ko^{-3}(S;\RZ)\to ko^{-4}(\pt;\RZ)$, identify~$\ca$ with the generator
$a\in ko^{-3}(\pt;\RZ)\cong \zt$, and write $\pi ^S_*\bigl((\pi ^S)
^*(a)\bigr)=\pi ^S_*(1)\,a$.  Finally, (i)~follows from $\pi ^S_*(1)=[S]\in
ko^{-1}(\pt)$ and \cite[(B.10)]{FMS}.

For~(ii) we observe that both $\phi \mapsto\pi^{M}_*\phi ^*\cc$ and the mod~2
Hopf invariant define homomorphisms $\pi _3(\Co)\to\zt$, so it suffices to
verify the equality on the Hopf map $S^3\to\Co$, which amounts to
verifying that the induced spin structure on a fiber of the Hopf map is
nonbounding, or equivalently the normal framing is nontrivial.  That follows
since the Hopf map represents the nontrivial element of~$\Omega
_1^{\textnormal{framed}}$ via the Pontrjagin-Thom construction.
  \end{proof}

We now turn to the classification of possible topological terms, so to
bordism computations of invertible field theories of spin 3-manifolds
equipped with a map to~$\Co$.  If $\sB$~is any spectrum then there is a short
exact sequence~\cite[(B.3)]{HS}
  \begin{equation}\label{eq:5}
     0\longrightarrow \Ext^1(\pi _{q-1}\sB,\ZZ)\longrightarrow [\sB,\Sigma
     ^q\IZ]\longrightarrow \Hom(\pi _q\sB,\ZZ) \longrightarrow 0
  \end{equation}
where $[\sX,\sY]$ denotes the group of homotopy classes of spectrum maps
$\sX\to\sY$.  The Thom bordism spectrum of spin manifolds equipped with a map
to~$\Co$ is\footnote{The `$+$'~denotes a disjoint basepoint, which occurs
since the Thom spectrum of $W\oplus 0\to B\!\Spin\times \Co$ is the smash
product of the Thom spectra of the universal bundle $W\to B\!\Spin$ and
$0\to\Co$; the latter is~$\Co_+$.}
  \begin{equation}\label{eq:6}
     \MSpin\wedge \Co_+ \;\simeq\; \MSpin \wedge \Co\,\vee\,\MSpin\;\simeq\;
     \Sigma ^2\MSpin \,\vee\,\MSpin.
  \end{equation}
We first ask if there is an integer-valued invertible field
theory\footnote{If $\nu $~in \eqref{anut} is an integer invariant of spin
manifolds which extends the Hopf invariant, then to use it in the action of a
quantum field theory it should be fully local, hence the partition function
of an invertible integer-valued topological field theory.} whose partition
function extends the Hopf invariant, so a map $\MSpin\wedge \Co_+\to\Sigma
^3\IZ$.  (We use the unproved assertion towards the end
of~\S\ref{subsec:2.1}.)  It follows from~\eqref{eq:6} and low dimension spin
bordism that $[\MSpin\wedge \Co_+,\Sigma ^3\IZ]\cong \zt$, and so the
partition function of any integer-valued theory vanishes.  In particular,
there is no possibility to write~$e^{i\nu \theta }$ in an exponentiated
action if $\nu$~specializes on~$S^3$ to the Hopf invariant.

More directly, we can compute the group of deformation classes of
\emph{topological} field theories.

  \begin{theorem}[]\label{thm:4}
 The group of deformation classes of unitary invertible topological field
theories of spin 3-manifolds equipped with a map to~$\Co$ is isomorphic
to~$\zt$; it is also the group of isomorphism classes of theories of this
type.  The generator has partition function~\eqref{eq:2}.
  \end{theorem}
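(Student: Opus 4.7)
The approach is to apply the general classification of~\cite{FH} summarized in~\S\ref{subsec:2.1} to the tangential structure at hand, $\sB = \MSpin\wedge\Co_+ \simeq \MSpin \vee \Sigma^2\MSpin$ by~\eqref{eq:6}. Deformation classes of unitary invertible $3$-dimensional theories form $[\sB,\Sigma^4\IZ]$, whose torsion subgroup picks out the topological ones; integer-valued topological theories up to isomorphism form $[\sB,\Sigma^3\IZ]$. The splitting gives $\pi_k\sB \cong \Omega_k^{\Spin}\oplus\Omega_{k-2}^{\Spin}$, and I would combine this with the low-dimensional spin bordism groups $\Omega_0^{\Spin}\cong\ZZ$, $\Omega_1^{\Spin}\cong\zt$, $\Omega_2^{\Spin}\cong\zt$, $\Omega_3^{\Spin}=0$, and $\Omega_4^{\Spin}\cong\ZZ$ to feed~\eqref{eq:5}.

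With $q=4$ this produces
\begin{equation*}
0\longrightarrow \Ext^1(\zt,\ZZ)\longrightarrow [\sB,\Sigma^4\IZ]\longrightarrow \Hom(\ZZ\oplus\zt,\ZZ)\longrightarrow 0,
\end{equation*}
an extension of $\ZZ$ by $\zt$ whose torsion subgroup is precisely $\zt$. With $q=3$ the $\Hom$-term vanishes and $[\sB,\Sigma^3\IZ]\cong \Ext^1(\ZZ\oplus\zt,\ZZ)\cong\zt$, agreeing with the remark made just before the theorem statement. A generator is supplied by the WZ factor $W\!\mstrut_M(\phi)$ of Definition~\ref{thm:2}: for $N=1$ it is locally constant in $\phi$ (since $E^4(\CP^1)\cong\zt$ is pure torsion by Lemma~\ref{thm:1}) and takes values in $\{\pm 1\}$, so it is simultaneously a $U(1)$-valued and an integer-valued topological theory. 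By Lemma~\ref{thm:3}(ii) its value on the Hopf map $S^3\to\Co$ is $-1$, so it represents a nontrivial class; since the classification group in both pictures is $\zt$, $W$ must generate it.

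The main obstacle I anticipate is not the bordism bookkeeping but the upgrade of the bare partition function $W\!\mstrut_M(\phi)$ to a fully extended, reflection-positive invertible topological field theory in the sense demanded by~\cite{FH}. The differential $E$-cohomology pushforward $\pi^M_*$ supplies the partition function on closed spin 3-manifolds, but verifying that it assembles into the higher categorical data---lines on spin 2-manifolds, and so on---requires the general machinery of spin pushforwards in Postnikov-truncated differential cohomology that underlies both the pion WZ term of~\cite{F2} and the spin Chern-Simons action of~\cite{BM,J}. The remaining ingredients---nontriviality and reflection positivity---follow respectively from the Hopf-map calculation and the positivity properties built into the differential-$E$ construction.
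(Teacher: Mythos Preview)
Your bordism computation and overall strategy match the paper's proof: apply~\eqref{eq:5} with $\sB=\MSpin\wedge\Co_+$, use the splitting~\eqref{eq:6} and low-dimensional spin bordism to read off the torsion subgroup of $[\sB,\Sigma^4\IZ]$ as~$\zt$. Two points deserve correction or comparison.

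First, the claim that $W$ is an integer-valued theory generating $[\sB,\Sigma^3\IZ]$ is wrong. That $\zt$ lies entirely in $\Ext^1(\pi_2\sB,\ZZ)$, so every class there has \emph{vanishing} partition function on closed $3$-manifolds; this is precisely the paper's observation, just before the theorem, that no integer-valued invertible theory extends the Hopf invariant. The fact that $W$ happens to take values in $\{\pm1\}\subset\ZZ$ is irrelevant: a spectrum map to $\Sigma^3\IZ$ would induce a group homomorphism $\pi_3\sB\cong\zt\to\ZZ$, and the only such is zero. Drop this side remark; it does not affect your main argument.

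Second, the paper sidesteps what you flag as the ``main obstacle.'' Rather than promoting the formula~\eqref{eq:2} to a fully extended reflection-positive theory and then locating it in the classification, the paper argues in the opposite direction. The classification already supplies an abstract generator of the torsion~$\zt$, and its partition function is determined by pairing against $\pi_3\sB\cong\pi_1\MSpin$. That isomorphism is realized geometrically by sending $(M,\phi)$ to the spin bordism class of the preimage of a regular value, so the generator's partition function is $(-1)^{[S]}$. Lemma~\ref{thm:3}(i) then identifies this with $W\!\mstrut_M(\phi)$. This route avoids any appeal to the extended-TFT machinery for differential $E$-cohomology pushforwards; you only need Lemma~\ref{thm:3}(i), not~(ii), for the identification.
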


\noindent
 There are non-topological invertible field theories whose partition function
is an exponentiated $\eta $-invariant.  They depend on a metric (but not on
the map~$\phi $ to~$\Co$).  Theorem~\ref{thm:4} rules out values of~$\theta $
other than $\theta =0$ and~$\theta =\pi $, as deduced in~\S\ref{sec:5} by a
different argument.

  \begin{proof}
 According to the main theorem in~\cite{FH} the group in the theorem is
the torsion subgroup of $[\MSpin\wedge \Co_+,\Sigma ^4\IZ]$, which is the
$\Ext^1$~group in~\eqref{eq:5}.  Using~\eqref{eq:6} we deduce
  \begin{equation}\label{eq:7}
     \pi _3\bigl(\MSpin\wedge \Co_+\bigr)\cong \pi _1\MSpin\cong \zt
  \end{equation}
since $\Ext^1(\zt,\ZZ)\cong \zt$.  The isomorphism $\pi _3\bigl(\MSpin\wedge
\CP^1_+ \bigr)\xrightarrow{\;\cong \;} \pi _1\MSpin$ maps a spin
3-manifold~$M$ equipped with a map $\phi \:M\to\Co$ to the inverse image of a
regular value, so the identification of the partition function follows from
the construction before Lemma~\ref{thm:3}.
  \end{proof}

  \begin{remark}[]\label{thm:6}
 The nontriviality of the bordism group~\eqref{eq:7} shows that the WZ
factor~\eqref{eq:2} cannot always be computed by extending over a bounding
4-manifold, for example for $M=S^3$ equipped with the Hopf map $\phi
\:S^3\to\Co$, as was done in~\eqref{CSfd}.
  \end{remark}

  \subsection{$\operatorname{\bf Spin}^c$ manifolds}\label{subsec:2.3}

We repeat the analysis for $\spinc$ manifolds.  The cohomology theory~$E$
does not have a $\spinc$ orientation---$E$-cohomology classes cannot be
integrated over $\spinc$ manifolds---so differential $E$-cohomology does not
enter our analysis.  On the other hand, complex $K$-theory is
$\spinc$-oriented, and so we use differential $K$-theory, but only implicitly
as we express the integral of a differential $K$-theory class as an
exponentiated $\eta $-invariant~\cite{K,O}.

Recall that the group $\Spcn$ is a central extension $1\to\TT\to\Spcn\to
SO_n\to 1$ of the special orthogonal group $SO_n$ by the circle group $\TT$
of phases.  A $\spinc$ manifold~$M$ is an oriented Riemannian manifold
equipped with a principal $\Spcn$-bundle $P\to M$ lifting its oriented
orthonormal bundle of frames and a connection---the \emph{spin$^c$
connection}---on $P\to M$ compatible with the Levi-Civita connection.  There
is a homomorphism $\Spcn\to\TT$ and so an associated circle bundle with
connection over~$M$, called the \emph{characteristic bundle}.  A $\spinc$
manifold has a canonical Dirac operator.  A spin structure on a $\spinc$
manifold is equivalent to a flat trivialization of its characteristic bundle.

The analog of Definition~\ref{thm:2} on a $\spinc$ manifold depends on the
$\spinc$ connection and the map~$\phi \:M\to\CP^N$, but not on the Riemannian
metric.  It uses the $\eta $-invariant of Atiyah-Patodi-Singer~\cite{APS}.
Recall that these authors define a more refined invariant~$\xi =(\eta
+\dim\ker)/2$ and that the \emph{exponentiated $\eta $-invariant }~$\exp(2\pi
i\xi )$ varies smoothly with parameters.  Let $M$~be a closed $\spinc$
3-manifold equipped with a smooth map $\phi \:M\to\CP^N$.  Let
$\mathcal{O}(1)\to \CP^N$ be the hyperplane line bundle with its standard
covariant derivative.

  \begin{definition}[]\label{thm:9}
  The \emph{WZ factor} in the $\sigma $-model exponentiated action on
$\spinc$ manifolds is the exponentiated $\eta $-invariant of the Dirac
operator coupled to the virtual bundle $\phi ^*\Oo-1$.
  \end{definition}

\noindent
 This is the ratio of the exponentiated $\eta $-invariant of the $\spinc$
Dirac operator coupled to $\phi ^*\Oo$ and the exponentiated $\eta
$-invariant of the uncoupled $\spinc$ Dirac operator.

We claim that this reproduces~\eqref{CSfd} for~$K=1$ in case $M$~bounds a
$\spinc$ 4-manifold~$W$ equipped with a map $\Phi \:W\to\CP^N$ which
extends~$\phi $.  In that case the main theorem in~\cite{APS} computes the WZ
factor as the exponential of the integral over~$W$ of the Chern-Weil
differential forms which represent the degree~4 term in
  \begin{equation}\label{eq:8}
     \Ahat(W)e^{c/2}\bigl(e^x-1 \bigr).
  \end{equation}
In this expression $c,x\in H^2(W;\ZZ)$ are the Chern classes of the
characteristic line bundle and $\Phi ^*\Oo$, respectively; each has degree~2.
The degree~4 term is $(x^2+xc)/2$, which matches the differential form
expression~\eqref{CSfd}.

  \begin{remark}[]\label{thm:10}
 Even for~$N=1$ the WZ factor on a $\spinc$ manifold depends on the $\spinc$
connection and the map~$\phi $ (not just up to homotopy), so is not a
topological invariant.
  \end{remark}

  \begin{remark}[]\label{thm:12}
 The exponentiated $\eta $-invariant is the partition function of an extended
invertible unitary field theory, so a valid factor in an exponentiated
action.  As a nonextended field theory of 2-~and 3-manifolds this follows
from the theorems in~\cite{DF}.  To construct the extended theories we
can use differential $K$-theory, following the ideas in~\cite{HS}.
  \end{remark}

  \begin{proposition}[]\label{thm:11}
 The $\spinc$ WZ factor in Definition~\ref{thm:9} reduces to the spin WZ
factor~\eqref{eq:2} on a spin manifold.
  \end{proposition}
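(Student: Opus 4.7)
The plan is to reduce to $N=1$ using naturality and then match both sides on the unique nontrivial generator of $\Omega_3^{\Spin}(\CP^1)$, which is the Hopf map.

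First, on a spin manifold~$M$ the characteristic bundle of the induced $\spinc$ structure is flatly trivialized, so the $\spinc$ Dirac operator of Definition~\ref{thm:9} coincides with the ordinary spin Dirac operator coupled to~$\phi ^*\Oo-1$. Both WZ factors are natural under a linear inclusion $\CP^1\hookrightarrow \CP^N$: the $\spinc$ one because the hyperplane bundle $\Oo\to\CP^N$ restricts to $\Oo\to\CP^1$, and the spin one by Lemma~\ref{thm:1}, which says that the generator of $E^4(\CP^N)$ restricts to the generator of~$E^4(\CP^1)$. Since both WZ factors have the same Chern-Weil curvature $x^2/2$ on~$\CP^N$ (coming from the first Chern class of~$\Oo$), their ratio is a homotopy invariant of~$\phi $; and every spin $3$-manifold admits a homotopy of $\phi $ into the $3$-skeleton $\CP^1\subset \CP^N$. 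It therefore suffices to prove the equality for $N=1$.

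Second, when $N=1$ both invariants are bordism invariants on spin $3$-manifolds. This is Lemma~\ref{thm:3} for the spin WZ factor. On the $\spinc$ side, APS expresses the variation across a spin bordism~$W$ via the Chern-Weil integrand
\[
\bigl(\Ahat(W)\,e^{c/2}\,(e^y-1)\bigr)_4 \;=\; y^2/2,
\]
using $c=0$ on a spin manifold and the absence of a degree-$2$ part of~$\Ahat$. Since $y=\Phi ^*\omega $ with $\omega \in \Omega ^2(\CP^1)$ and $\omega ^2=0$ on the $2$-manifold~$\CP^1$, this integrand vanishes identically as a form, so the exponentiated $\eta $-invariant descends to a homomorphism $\Omega _3^{\Spin}(\CP^1)\cong \zt\to \RR/\ZZ$.

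Third, the nontrivial element of~$\Omega _3^{\Spin}(\CP^1)$ is represented by the Hopf map $\phi \:S^3\to\CP^1$; by Lemma~\ref{thm:3}(ii) the spin WZ factor equals~$-1$ there, so it remains to show the same for the $\spinc$ WZ factor. Take a bounding $\spinc$ $4$-manifold over which the Hopf map extends, for example $\CP^2$ with a small $4$-ball removed together with the rational map $[x\!:\!y\!:\!z]\mapsto [x\!:\!y]$, or the disk bundle of $\Oo(\pm1)\to\CP^1$, and apply APS. The main obstacle is that no spin $4$-manifold bounds the pair $(S^3,\phi _{\mathrm{Hopf}})$, so the $\spinc$ structure induced on the boundary from the bulk carries a \emph{curved} characteristic connection (the natural connection on~$\phi ^*\Oo$), not the flat one used in Definition~\ref{thm:9}. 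The two exponentiated $\eta $-invariants differ by a Chern-Simons transgression of the characteristic $\spinc$ connection, which evaluates to $\tfrac12\pmod 1$ on the Hopf connection and delivers the required~$-1$. Equivalently, this entire computation packages the classical identification (alluded to in Remark~\ref{thm:14}) of level-one spin $U(1)$ Chern-Simons with an exponentiated $\eta $-invariant.
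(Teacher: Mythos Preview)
Your first two steps are correct and cleanly argued. The gap is in step~3: you outline an APS computation on a bounding $\spinc$ $4$-manifold and correctly identify the obstacle---the induced characteristic connection on the boundary~$S^3$ is not the flat one coming from the spin structure---but you do not actually carry out the resulting correction. The assertion that the Chern-Simons transgression ``evaluates to $\tfrac12\pmod 1$'' is precisely the content of the Hopf-map case and is left unjustified; and your closing appeal to the ``classical identification'' of level-one spin $U(1)$ Chern-Simons with an exponentiated $\eta$-invariant is, via Remark~\ref{thm:14}, essentially equivalent to the proposition itself, so it cannot serve as input without an independent proof or reference.

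The paper sidesteps this computation by running the naturality argument in the \emph{opposite} direction. Instead of reducing general~$N$ to $N=1$ and attacking the Hopf map head-on, it composes $\phi\:M\to\CP^1$ with a linear embedding $\CP^1\hookrightarrow\CP^2$---both WZ factors are unchanged, as you also observe---and then shows via the cofibration sequence $S^3\xrightarrow{\eta}S^2\to\CP^2$ that $\pi_3\bigl(\MSpin\wedge\CP^2\bigr)=0$. Hence every spin $3$-manifold with a map to~$\CP^2$ bounds a \emph{spin} $4$-manifold over which the map extends, and both WZ factors are then given by the same Stokes/APS integral $\int_W x^2/2$ already established in the bounding case. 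The point is that the non-bounding Hopf class in $\Omega_3^{\Spin}(\CP^1)$ becomes null-bordant after pushing into~$\CP^2$, so no direct Hopf-map computation is needed.
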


\noindent
 In particular, by Lemma~\ref{thm:3}(ii), it extends the mod~2 Hopf
invariant.

  \begin{proof}
 Suppose first that the spin 3-manifold~$M$ bounds a spin 4-manifold~$W$ over
which $\phi $~extends.  A spin structure on a $\spinc$ manifold trivializes
the characteristic class~$c$ of the characteristic bundle, so the degree~4
term in~\eqref{eq:8} reduces to~$x^2/2$.  The corresponding statement about
differential forms follows since the curvature of the characteristic bundle
vanishes.  Then the integral over~$W$ which computes the $\spinc$ WZ factor
reduces to the one for the spin WZ factor alluded to in Remark~\ref{thm:7}.
This proves the proposition in the bounding case.

If $N=1$~then \eqref{eq:7}~computes the relevant bordism group to be cyclic
of order~2 with generator the Hopf map $\phi \:S^3\to\Co$ (see
Remark~\ref{thm:6}), so we cannot directly apply the argument in the previous
paragraph.  Observe, however, that if $\CP^1\hookrightarrow \CP^N$ is a
linear embedding then the bundle $\Oo\to\CP^N$ with its covariant derivative
restricts to the bundle $\Oo\to\Co$ with its covariant
derivative.\footnote{The hermitian metric and holomorphic structure restrict,
so too does the resultant Chern covariant derivative.}  Since $\cc$~in
Definition~\ref{thm:2} is a differential characteristic class
of~$\Oo\to\CP^N$, it follows that both WZ factors are unchanged by composing
$\phi \:M\to\Co$ with the embedding $\CP^1\hookrightarrow \CP^N$.  So it
suffices to prove that any spin 3-manifold~$M$ equipped with a map $\phi
\:M\to\CP^N$ bounds for any~$N\ge2$.  The map~$\phi $ can be homotoped into
the 4-skeleton~$\CP^2$, so it suffices to take~$N=2$.

First, arguing as in~\eqref{eq:6} we are reduced to showing $A:=\pi
_3\bigl(\MSpin\wedge \CP^2 \bigr)=0$.  The cofibration sequence
$S^3\xrightarrow{\eta } S^2\to \CP^2$ gives rise to the exact sequence
  \begin{equation}\label{eq:10}
      \pi _3\bigl(\MSpin\wedge S^3
     \bigr)\xrightarrow{\;\;\eta \;\;} \pi _3\bigl( \MSpin\wedge
     S^2\bigr)\longrightarrow \pi _3\bigl(\MSpin\wedge \CP^2
     \bigr)\longrightarrow \pi _2(\MSpin\wedge S^3),
  \end{equation}
which simplifies to $\pi _0\MSpin\to\pi _1\MSpin\to A \to 0$.  The Hopf
map~$\eta $ induces a surjective map $\pi _0\MSpin\to\pi _1\MSpin$ (as
stated above it represents the generator of stable framed bordism), from
which we conclude~$A=0$.
  \end{proof}

Next, we observe that there is no unitary integer-valued invertible field
theory of $\spinc$ manifolds equipped with a map to~$\Co$ whose partition
function specializes to the Hopf invariant.  For if there were, it would
restrict to a theory of spin manifolds with that property and that was ruled
out after~\eqref{eq:6}.  Similarly, it follows from Theorem~\ref{thm:4}
that there is no invertible $\CC^\times $-valued $\spinc$ theory which
specializes to a theory whose partition function for~$\phi \:S^3\to\Co$
is~$e^{i\nu \theta }$, where $\nu $~is the Hopf invariant and $\theta \not=
0,\pi $; any such would restrict to a spin theory with those properties.

Finally, we justify that~\eqref{CSfd} can be used in the $\spinc$ case by
computing that every 3-dimensional $\spinc$ manifold~$M$ equipped with a map
$\phi \:M\to\CP^N$ bounds, i.e., $B^{(N)}:=\pi _3(\MSpin^c\wedge \CP^N_+)=0$.
First, as in previous arguments we reduce the case~$N>2$ to~$N=2$, and we can
omit the disjoint basepoint~`$+$' since $\pi _3\MSpin^c=0$.  For~$N=2$ we use
the exact sequence~\eqref{eq:10} with $\MSpin^c$ replacing~$\MSpin$, and
since $\pi _1\MSpin^c=0$ we deduce~$B^{(N)}=0$ for~$N\ge2$.  For~$N=1$ we
also see $B^{(1)}\cong \pi _1\MSpin^c=0$.

  \subsection{The variation in~\S4}\label{subsec:2.4}

The long distance theory derived in~\S4 has fields (i)~a map $\phi
\:M\to\CP^N$, (ii)~a line bundle with connection $L\to M$, and
(iii)~ an isomorphism
  \begin{equation}\label{eq:11}
      \theta \:\phi ^*\Oo\xrightarrow{\;\cong \;}L^{\otimes q}
  \end{equation}
of line bundles with connection.  The set of isomorphism classes of
pairs~$(L,\theta )$ is a torsor over the set of isomorphism classes of
principal $\zmod q$-bundles: more precisely,\footnote{The following analog
may be helpful: a spin structure on a Riemann surface~$\Sigma $ is a
pair~$(L,\theta )$ where $L\to\Sigma $ is a holomorphic line bundle and
$\theta \:K_\Sigma \xrightarrow{\;\cong \;}L^{\otimes 2}$ an isomorphism of
the canonical bundle of~$\Sigma $ with the square of~$L$.  Any two spin
structures are related by a double cover.}  given two pairs $(L,\theta
),\,(L',\theta ')$ there is a canonical $\zmod q$-bundle $Q\to M$ such that
$(L',\theta ')\cong (L,\theta )\otimes Q$.  The $\spinc$ WZ factor has an
easy generalization in this case.

  \begin{definition}[]\label{thm:13}
 The \emph{$q $-WZ factor} in the $\sigma $-model exponentiated action on
$\spinc$ manifolds is the exponentiated $\eta $-invariant of the Dirac
operator coupled to the virtual bundle $L-1$.
  \end{definition}

\noindent
 Assuming the 3-manifold~$M$ and all its geometric data extend over a compact
4-manifold with boundary~$M$, then \eqref{eq:8}~is modified by substituting
$x\to x/q$, the consequence of~\eqref{eq:11} on Chern classes.  The degree~4
term is then $(x^2/q^2 + xc/q)/2$, which matches \eqref{CSfdbp}.

As in Proposition~\ref{thm:11} we can restrict this definition to spin
manifolds, or alternatively take the direct image of the differential
$E$-characteristic class~ $\lambda $ of the line bundle with connection~$L\to
M$, analogous to Definition~\ref{thm:2}.  (See the proof of
Lemma~\ref{thm:1}.)

  \subsection*{Acknowledgments}\label{subsec:2.5}

We would like to thank Luis Alvarez-Gaume and Paul Wiegmann for useful
discussions.  This work started during a workshop at the Simons Center for
Geometry and Physics.  We thank the SCGP for providing the nice and
stimulating environment that facilitated this collaboration.  D.F. is
supported in part by the National Science Foundation under Grant Number
DMS-1611957.  He completed part of this paper at the Aspen Center for
Physics, which is supported by National Science Foundation grant PHY-1607611.
Z.K. is supported in part by an Israel Science Foundation center for
excellence grant and by the I-CORE program of the Planning and Budgeting
Committee and the Israel Science Foundation (grant number 1937/12). Z.K. is
also supported by the ERC STG grant 335182 and by the Simons Foundation grant
488657 (Simons Collaboration on the Non-Perturbative Bootstrap). N.S. is
supported in part by DOE grant DE-SC0009988.  Any opinions, findings, and
conclusions or recommendations expressed in this material are those of the
authors and do not necessarily reflect the views of the National Science
Foundation or the DOE.


\providecommand{\href}[2]{#2}\begingroup\raggedright\endgroup

  \end{document}